\newtheorem {Theorem}                 {Theorem}         [section]
\newtheorem {theorem}      [Theorem]  {Theorem}
\newtheorem {myalgorithm}    [Theorem]  {Algorithm}
\newtheorem {lemma}        [Theorem]  {Lemma}
\journal{arXiv}
\begin{document}
	\begin{frontmatter}
		\title{ The problem of computing a $2$-T-connected spanning subgraph with minimum number of edges in directed graphs}
		\author{Raed Jaberi}
		\ead{Jaberi.Raed@gmail.com}
	    \author{Reham Mansour}
	    \ead{rehammansor1698@gmail.com‏}
	   
		\begin{abstract}
		 
		 Let $G=(V,E)$ be a strongly connected graph with $|V|\geq 3$. For $T\subseteq V$, the strongly connected graph $G$ is $2$-T-connected if $G$ is $2$-edge-connected and for each vertex $w$ in $T$, $w$ is not a strong articulation point.
This concept generalizes the concept of $2$-vertex connectivity when $T$ contains all the vertices in $G$. This concept also generalizes the concept of $2$-edge connectivity when $|T|=0$. The concept of $2$-T-connectivity was introduced by Durand de Gevigney and Szigeti in $2018$. In this paper, we prove that there is a polynomial-time 4-approximation algorithm for the following problem: given a $2$-T-connected graph $G=(V,E)$, identify a subset $E^ {2T} \subseteq E$ of minimum cardinality such that $(V,E^{2T})$ is $2$-T-connected. 
	\end{abstract} 
		\begin{keyword}
			 Graph connectivity \sep Directed graphs \sep Graph algorithms 
		\end{keyword}
	\end{frontmatter}
	\section{Introduction}
Let $G=(V,E)$ be a strongly connected graph, where $n=|V|$ and $m=|E|$. A vertex $w \in V$ is a strong articulation point if the subgraph $G \setminus \lbrace w\rbrace$ is no longer strongly connected. An edge $(u,w) \in E$ is a strong bridge if the subgraph $(V,E\setminus \lbrace (u,w)\rbrace)$ is no longer strongly connected. The strongly connected graph $G$ is $2$-vertex connected if $|V|\geq 3$ and for each vertex $w \in V$, $w$ is not a strong articulation point. We say that $G$ is $2$-edge connected if for every edge $e \in E$, $e$ is not a strong bridge. For $T\subseteq V$, the strongly connected graph $G$ is called $2$-T-connected if $G$ is $2$-edge connected and no vertex in $ T$ is a strong articulation point \cite{GevigneySzigeti2018}.  This concept generalizes the concept of $2$-vertex connectivity when $T$ contains all the vertices in $G$. This concept also generalizes the concept of $2$-edge connectivity when $|T|=0$ \cite{GevigneySzigeti2018}.

In this paper we consider the following problem, denoted by M2TC: given a $2$-T-connected graph $G=(V,E)$, identify a subset $E^ {2T} \subseteq E$ of minimum cardinality such that $(V,E^{2T})$ is $2$-T-connected. 

Let $G=(V,E)$ be a strongly connected graph and let $u \in V$. Obviously, $G(u)=(V,E,u)$ is a flowgraph with start vertex $u$ since for all vertices $w\in V$, there exists a path from the start vertex $u$ to $w$. A vertex $x\in V$ is a dominator of vertex $y$ in $G(u)$ if $x$ lies on all paths from the start vertex $u$ to $y$ in $G(u)$. A vertex $x \in V\setminus \lbrace u\rbrace$ is a non-trivial dominator in $G(u)$ if $x$ is a dominator of some vertex $y \in V\setminus \lbrace u,x\rbrace$ \cite{ItalianoLauraSantaroni2012}. Two spanning trees $T_{u}^{1},T_{u}^{2}$ of $G(u)$ rooted at the start vertex $u$ in $G(u)$ are independent if for any vertex $x\in V\setminus \lbrace u\rbrace$, the paths from $u$ to $x$ in $T_{u}^{1}$ and $T_{u}^{2}$ have only the dominators of $x$ in common \cite{GeorgiadisTarjan2005,Georgiadis12011}.

In $2010$, Georgiadis \cite{Georgiadis2010} showed how to check whether a given strongly connected graph is $2$-vertex connected in $O(n+m)$ time. Italiano et al. \cite{ItalianoLauraSantaroni2012} gave a linear time algorithm for computing all the strong articulation point of a directed graph. The idea behind this algorithm \cite{ItalianoLauraSantaroni2012} comes from the following important property: Let $G=(V,E)$ be a strongly connected graph and let $u \in V$. A vertex $w\in V\setminus \lbrace u\rbrace$ is a strong articulation point if and only if $w$ is a non-trivial dominator in $G(u)$ or in $G^{R}(u)=(V,E^{R},u)$, where $E^{R}$ denotes the set $\lbrace (a,b) \mid (b,a) \in E\rbrace$. Moreover, Italiano et al. \cite{ItalianoLauraSantaroni2012} gave linear time algorithms for finding strong bridges. 
In \cite{FirmaniGeorgiadisItalianoLauraSantaroni2016}, Firmani et al. designed and engineered all algorithms presented in \cite{ItalianoLauraSantaroni2012}. In $2018$, Durand de Gevigney and Szigeti \cite{GevigneySzigeti2018} introduced the concept of $2$-T-connectivity and proved that every minimal $2$-T-connected graph contains a vertex with indegree and outdgree $2$.
The problem of calculating a $k$-vertex-connected spanning subgraph with minimum number of edges in directed graphs is NP-hard \cite{GareyJohnson1979}. Since this problem is a special case of M2TC when $T=V$ and $k=2$, then M2TC is also NP-hard. By results from \cite{Edmonds1972,Mader1985}, the cardinality of the edge set of each minimal $k$-vertex-connected directed graph is at most $2kn$ edges \cite{CheriyanTThurimella2000}. 
Cheriyan et al. \cite{CheriyanTThurimella2000} provided a polynomial time algorithm that achieves factor of $(1+1/k)$ for the minimum cardinality $k$-vertex-connected spanning subgraph problem. Georgiadis et al. \cite{GeorgiadisItalianoKaranasiou2020,GeorgiadisItalianoKaranasiou2017,Georgiadis12011} gave efficient approximation algorithms for the minimum cardinality $2$-vertex connected spanning subgraph problem in directed graphs and improved the algorithm of Cheriyan et al. \cite{CheriyanTThurimella2000}. 
Jaberi \cite{Jaberi2016} studied three problems which are generalization of the minimum cardinality $2$-vertex connected spanning subgrpah problem. Furthermore, Jaberi\cite{Jaberi2015} presented algorithms for the problem of computing $2$-blocks in directed graphs and approximation algorithms for the problem of computing a minimum size strongly connected spanning subgraph with the same $2$-blocks. Georgiadis et al. \cite{GeorgiadisGeorgiadisLauraParotsidis2015,GeorgiadisItalianoLauraParotsidis2016,
GeorgiadisItalianoLauraParotsidis15,GeorgiadisItalianoLauraParotsidis2018,
GeorgiadisItalianoLauraParotsidisKaranasiouPaudel2018,
GeorgiadisItalianoPapadopoulosPapadopoulosParotsidis2015} provided efficient algorithms for solving these problems. Jaberi \cite{Jaberi2021,Jaberi2edgetwinlessblocks,JaberiComputing2twinlessblocks} presented approximation algorithms for the minimum $2$-vertex strongly biconnected spanning subgraph problem and algorithms for computing $2$-twinless blocks and $2$-edge-twinless blocks.
In this paper, we prove that there is a $4$-approximation algorithm for M2TC

\section{$4$-Approximation algorithm for M2TC}
In this section we describe an approximation algorithm for M2TC which is based on strong articulation points \cite{ItalianoLauraSantaroni2012} ,independent spanning trees \cite{GeorgiadisTarjan2005}, and the results of Georgiadis et al. \cite{GeorgiadisItalianoKaranasiou2020,GeorgiadisItalianoKaranasiou2017,Georgiadis12011}. 

Note that each $2$-T-connected graph is $2$-edge-connected, but the converse is not necessarily true. Therefore, optimal solutions for minimum $2$-edge-connected spanning subgraph problem are not necessarily $2$-T-connected  as illustrated in Figure \ref{figure:exampleM2TCandItsOptimalSolution}
\begin{figure}[htbp]
	\centering

\subfigure[]{	
\scalebox{0.81}{

		\begin{tikzpicture}[xscale=1.9]
		\tikzstyle{every node}=[color=black,draw,circle,minimum size=0.9cm]
		\node (v1) at (-1.9,3.1) {$1$};
		\node (v10) at (1.9,-1.5) {$10$};
		\node (v2) at (-2.4,0) {$2$};
		\node (v3) at (0, -2.5) {$3$};
		\node (v4) at (3.4,-2.1) {$4$};
		\node (v5) at (1.4,0.1) {$5$};
		\node (v6) at (3.7,0.7) {$6$};
		\node (v7) at (3.7,3.4) {$7$};
		\node (v8) at (0.1,3.5) {$8$};
		\node (v9) at (-0.7,0) {$9$};
		\node (v13) at (-2,-2.5) {$13$};
		\node (v11) at (1.9,2.5){$11$};
		\node (v12) at (-0.7,-2.5) {$12$};
	     
		\begin{scope}   
		\tikzstyle{every node}=[auto=right] 
		\draw [-triangle 45] (v6) to (v10);  
		\draw [-triangle 45] (v2) to [bend left](v6);
		\draw [-triangle 45] (v4) to (v10);
		\draw [-triangle 45] (v10) to [bend right](v13);
	     \draw [-triangle 45] (v10) to[bend right]  (v4);
	  
		\draw [-triangle 45] (v3) to (v5);
	     \draw [-triangle 45] (v5) to[bend right]  (v3);
	     \draw [-triangle 45] (v10) to (v3);
	     \draw [-triangle 45] (v3) to[bend right]  (v10);
	     
\draw [-triangle 45] (v8) to[bend right] (v11);
	     \draw [-triangle 45] (v11) to  (v8);
	     \draw [-triangle 45] (v11) to (v7);
	     \draw [-triangle 45] (v7) to[bend right]  (v11);

		\draw [-triangle 45] (v2) to (v1);
	     \draw [-triangle 45] (v1) to[bend right]  (v2);
	     \draw [-triangle 45] (v12) to (v13);
	     \draw [-triangle 45] (v13) to[bend right]  (v12);	
	     
	     \draw [-triangle 45] (v8) to[bend right]  (v1);
	     \draw [-triangle 45] (v1) to (v8);
	     \draw [-triangle 45] (v8) to [bend right] (v9);
	     \draw [-triangle 45] (v9) to (v8);	
	     
	         \draw [-triangle 45] (v8) to[bend right]  (v5);
	     \draw [-triangle 45] (v5) to (v8);
	     \draw [-triangle 45] (v12) to [bend right] (v9);
	     \draw [-triangle 45] (v9) to (v12);	
	         \draw [-triangle 45] (v7) to[bend right]  (v6);
	     \draw [-triangle 45] (v6) to (v7);
	     \draw [-triangle 45] (v6) to  (v4);
	     \draw [-triangle 45] (v4) to [bend right] (v6);	
	     
	      \draw [-triangle 45] (v2) to[bend right] (v13);
	     \draw [-triangle 45] (v13) to  (v2);

		\end{scope}
		\end{tikzpicture}
	}
	}	
	\subfigure[]{
\scalebox{0.81}{

		\begin{tikzpicture}[xscale=1.9]
		\tikzstyle{every node}=[color=black,draw,circle,minimum size=0.9cm]
		\node (v1) at (-1.9,3.1) {$1$};
		\node (v2) at (-2.4,0) {$2$};
		\node (v3) at (0, -2.5) {$3$};
		\node (v4) at (3.4,-2.1) {$4$};
		\node (v5) at (1.4,0.1) {$5$};
		\node (v6) at (3.7,0.7) {$6$};
		\node (v7) at (3.7,3.4) {$7$};
		\node (v8) at (0.1,3.5) {$8$};
		\node (v9) at (-0.7,0) {$9$};
		\node (v10) at (1.9,-1.5) {$10$};
		\node (v11) at (1.9,2.5){$11$};
		\node (v12) at (-0.7,-2.5) {$12$};
	     \node (v13) at (-2,-2.5) {$13$};
		\begin{scope}   
		\tikzstyle{every node}=[auto=right] 
	
		\draw [-triangle 45] (v4) to (v10);
	
	     \draw [-triangle 45] (v10) to[bend right]  (v4);
	  
		\draw [-triangle 45] (v3) to (v5);
	     \draw [-triangle 45] (v5) to[bend right]  (v3);
	     \draw [-triangle 45] (v10) to (v3);
	     \draw [-triangle 45] (v3) to[bend right]  (v10);
	     
\draw [-triangle 45] (v8) to[bend right] (v11);
	     \draw [-triangle 45] (v11) to  (v8);
	     \draw [-triangle 45] (v11) to (v7);
	     \draw [-triangle 45] (v7) to[bend right]  (v11);

		\draw [-triangle 45] (v2) to (v1);
	     \draw [-triangle 45] (v1) to[bend right]  (v2);
	     \draw [-triangle 45] (v12) to (v13);
	     \draw [-triangle 45] (v13) to[bend right]  (v12);	
	     
	     \draw [-triangle 45] (v8) to[bend right]  (v1);
	     \draw [-triangle 45] (v1) to (v8);
	     \draw [-triangle 45] (v8) to [bend right] (v9);
	     \draw [-triangle 45] (v9) to (v8);	
	     
	         \draw [-triangle 45] (v8) to[bend right]  (v5);
	     \draw [-triangle 45] (v5) to (v8);
	     \draw [-triangle 45] (v12) to [bend right] (v9);
	     \draw [-triangle 45] (v9) to (v12);	
	         \draw [-triangle 45] (v7) to[bend right]  (v6);
	     \draw [-triangle 45] (v6) to (v7);
	     \draw [-triangle 45] (v6) to  (v4);
	     \draw [-triangle 45] (v4) to [bend right] (v6);	
	     
	      \draw [-triangle 45] (v2) to[bend right] (v13);
	     \draw [-triangle 45] (v13) to  (v2);

		\end{scope}
		\end{tikzpicture}
	}
	}

\subfigure[]{
	
\scalebox{0.81}{

		\begin{tikzpicture}[xscale=1.9]
		\tikzstyle{every node}=[color=black,draw,circle,minimum size=0.9cm]
		\node (v1) at (-1.9,3.1) {$1$};
		\node (v2) at (-2.4,0) {$2$};
		\node (v3) at (0, -2.5) {$3$};
		\node (v4) at (3.4,-2.1) {$4$};
		\node (v5) at (1.4,0.1) {$5$};
		\node (v6) at (3.7,0.7) {$6$};
		\node (v7) at (3.7,3.4) {$7$};
		\node (v8) at (0.1,3.5) {$8$};
		\node (v9) at (-0.7,0) {$9$};
		\node (v10) at (1.9,-1.5) {$10$};
		\node (v11) at (1.9,2.5){$11$};
		\node (v12) at (-0.7,-2.5) {$12$};
	     \node (v13) at (-2,-2.5) {$13$};
		\begin{scope}   
		\tikzstyle{every node}=[auto=right] 
		  
		\draw [-triangle 45] (v2) to [bend left](v6);
		\draw [-triangle 45] (v4) to (v10);
		\draw [-triangle 45] (v10) to [bend right](v13);
	     \draw [-triangle 45] (v10) to[bend right]  (v4);
	  
		\draw [-triangle 45] (v3) to (v5);
	     \draw [-triangle 45] (v5) to[bend right]  (v3);
	     \draw [-triangle 45] (v10) to (v3);
	     \draw [-triangle 45] (v3) to[bend right]  (v10);
	     
\draw [-triangle 45] (v8) to[bend right] (v11);
	     \draw [-triangle 45] (v11) to  (v8);
	     \draw [-triangle 45] (v11) to (v7);
	     \draw [-triangle 45] (v7) to[bend right]  (v11);

		\draw [-triangle 45] (v2) to (v1);
	     \draw [-triangle 45] (v1) to[bend right]  (v2);
	     \draw [-triangle 45] (v12) to (v13);
	     \draw [-triangle 45] (v13) to[bend right]  (v12);	
	     
	     \draw [-triangle 45] (v8) to[bend right]  (v1);
	     \draw [-triangle 45] (v1) to (v8);
	     \draw [-triangle 45] (v8) to [bend right] (v9);
	     \draw [-triangle 45] (v9) to (v8);	
	     
	         \draw [-triangle 45] (v8) to[bend right]  (v5);
	     \draw [-triangle 45] (v5) to (v8);
	     \draw [-triangle 45] (v12) to [bend right] (v9);
	     \draw [-triangle 45] (v9) to (v12);	
	         \draw [-triangle 45] (v7) to[bend right]  (v6);
	     \draw [-triangle 45] (v6) to (v7);
	     \draw [-triangle 45] (v6) to  (v4);
	     \draw [-triangle 45] (v4) to [bend right] (v6);	
	     
	      \draw [-triangle 45] (v2) to[bend right] (v13);
	     \draw [-triangle 45] (v13) to  (v2);

		\end{scope}
		\end{tikzpicture}
	}
	}
\caption{(a) A $2$-T-connected graph $G$ for T$=\lbrace 8\rbrace$. (b) A $2$-edge-connected subgraph of the graph $G$. Note that this subgraph is not $2$-T-connected for T$=\lbrace 8\rbrace$. (c) An optimal solution for M2TC.}
\label{figure:exampleM2TCandItsOptimalSolution}
\end{figure}
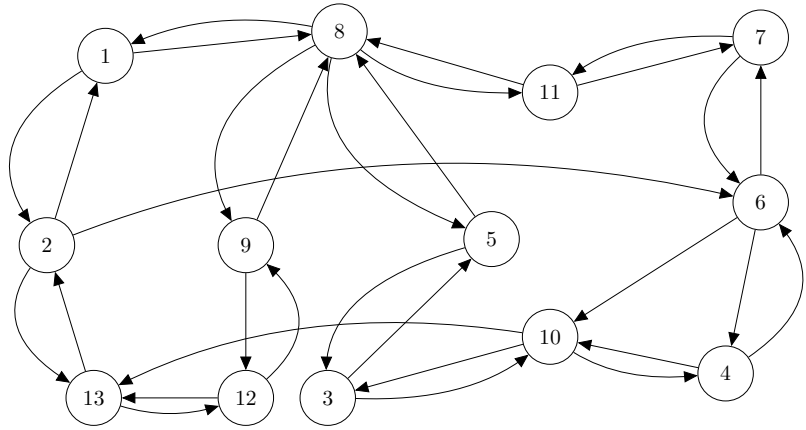
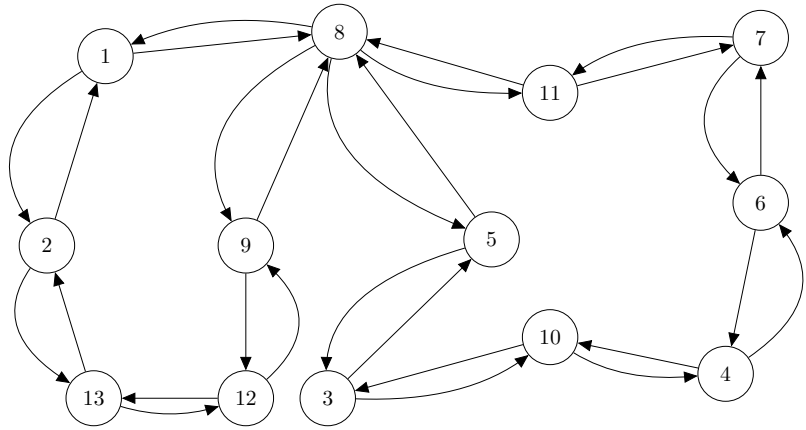
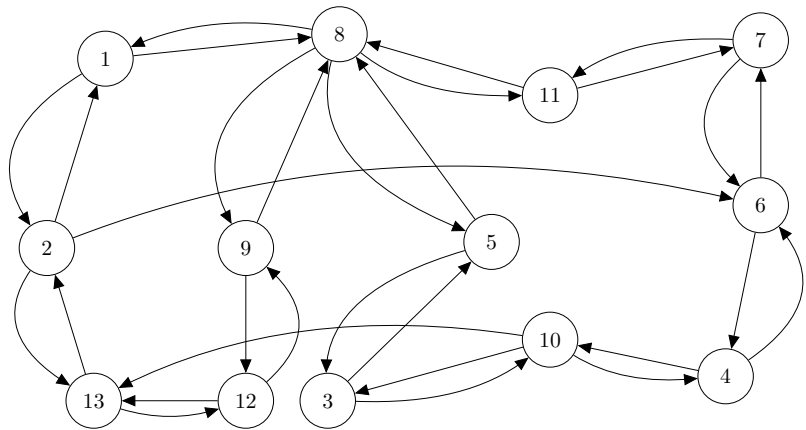

\begin{figure}[ht]
	\begin{myalgorithm}\label{algo:approximationalgorithmfor2Tconnected}\rm\quad\\[-5ex]
		\begin{tabbing}
			\quad\quad\=\quad\=\quad\=\quad\=\quad\=\quad\=\quad\=\quad\=\quad\=\kill
			\textbf{Input:} A $2$-T-connected graph $G=(V,E)$ \\
			
			\textbf{Output:} a $2$-T-connected subgraph $G_{\alpha}=(V,E_{\alpha})$\\
			{\small 1}\> \textbf{if} $T=\emptyset$ \textbf{then}\\
            {\small 2}\>\> find a minimal $2$-edge-connected subgraph $G_{\alpha}=(V,E_{\alpha})$ \\	
             {\small 3}\> \textbf{else} \textbf{if} $T=V$ \textbf{then} \\
             {\small 4}\>\> calculate a minimal $2$-vertex-connected subgraph $G_{\alpha}=(V,E_{\alpha})$ \\
             {\small 5}\> \textbf{else}\\		
             {\small 6}\>\> calculate a minimal $2$-edge-connected subgraph $G_{\alpha}=(V,E_{\alpha})$\\
            {\small 7}\>\> \textbf{if} $|T|>0$ and $|T|\leq 2$ \textbf{then}\\
            {\small 8}\>\>\> \textbf{for} each vertex $y \in T$ \textbf{do}\\
            {\small 9}\>\>\>\> choose a vertex $q \in V\setminus \lbrace y\rbrace$\\
            {\small 10}\>\>\>\>  find a spanning tree $T^{1}$ rooted at $q$ of $G\setminus \lbrace y\rbrace$\\
            {\small 11}\>\>\>\>  \textbf{for} each edge $e\in T^{1}$ \textbf{do} \\         
            {\small 12}\>\>\>\>\>  $E_{\alpha} \rightarrow E_{\alpha}\cup \lbrace e\rbrace$\\
            {\small 13}\>\>\>\>  find a spanning tree $T^{2}$ rooted at $q$ of $G^{R}\setminus \lbrace y\rbrace$, where $G^{R}=(V,E^{R})$\\
            {\small 14}\>\>\>\>  \textbf{for} each edge $(v,w)\in T^{2}$ \textbf{do} \\         
            {\small 15}\>\>\>\>\>  $E_{\alpha} \rightarrow E_{\alpha}\cup \lbrace (w,v)\rbrace$\\
            {\small 16}\>\> \textbf{else}\\
           
			{\small 17}\>\>\> select a vertex $w\in V \setminus T$.\\
			{\small 18}\>\>\> determine two independent trees $T_{1}^{w},T_{2}^{w}$ rooted at $w$ of $G$\\
			 {\small 19}\>\>\>\>  \textbf{for} each edge $e\in T_{1}^{w} \cup T_{2}^{w}$ \textbf{do} \\         
            {\small 20}\>\>\>\>\>  $E_{\alpha} \rightarrow E_{\alpha}\cup \lbrace e\rbrace$\\
			{\small 21}\>\>\>   determine two independent trees $T_{1}^{R},T_{2}^{R}$ rooted at $w$ of  $G^{R}=(V,E^{R})$\\
			{\small 22}\>\>\>  $E_{\alpha} \rightarrow T_{1}^{w} \cup T_{2}^{w}$ \\
			      {\small 23}\>\>\>\>  \textbf{for} each edge $(v,w)\in T_{1}^{R}\cup T_{2}^{R}$ \textbf{do} \\         
            {\small 24}\>\>\>\>\>  $E_{\alpha} \rightarrow E_{\alpha}\cup \lbrace (w,v)\rbrace$\\
		{\small 25}\>output the subgraph $G_{\alpha}=(V,E_{\alpha})$
	
		\end{tabbing}
	\end{myalgorithm}
\end{figure}

The following lemma shows that the subgraph returned by Algorithm \ref{algo:approximationalgorithmfor2Tconnected} is $2$-T-connected.

\begin{lemma} \label{def:outputis2Tconnected}
The output of Algorithm \ref{algo:approximationalgorithmfor2Tconnected} is a feasible solution for M2TC.
\end{lemma}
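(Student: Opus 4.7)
My plan is to verify the two defining properties of 2-T-connectivity for $G_\alpha$: first, that $G_\alpha$ is 2-edge-connected, and second, that no vertex $y\in T$ is a strong articulation point of $G_\alpha$. I would proceed by the case distinction of the algorithm. The two extreme branches are immediate: if $T=\emptyset$ then 2-T-connectivity reduces to 2-edge-connectivity, which line 2 enforces directly; and if $T=V$ then 2-T-connectivity coincides with 2-vertex-connectivity, which line 4 produces. In every remaining branch, line 6 first constructs a minimal 2-edge-connected spanning subgraph and all subsequent operations only add edges to $E_\alpha$, so the 2-edge-connectivity of $G_\alpha$ is inherited and I need only check the absence of strong articulation points in $T$.

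For the branch $1\le |T|\le 2$, I would fix an arbitrary $y\in T$ and argue that $G_\alpha\setminus\{y\}$ is strongly connected. The algorithm chooses $q\in V\setminus\{y\}$ and adds the edges of a spanning tree $T^{1}$ of $G\setminus\{y\}$ rooted at $q$ to $E_\alpha$, yielding a directed $q$-to-$x$ path in $G_\alpha\setminus\{y\}$ for every $x\in V\setminus\{y\}$ that avoids $y$. It then takes a spanning tree $T^{2}$ of $G^{R}\setminus\{y\}$ rooted at $q$ and adds the reverses of its edges, which translates into a directed $x$-to-$q$ path in $G_\alpha\setminus\{y\}$ for every such $x$. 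Chaining these two families of paths through $q$ gives strong connectivity of $G_\alpha\setminus\{y\}$, so $y$ is not a strong articulation point of $G_\alpha$.

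The remaining branch $|T|>2$ is the main obstacle, and it rests on the dominator-based characterization of strong articulation points from Italiano et al.~\cite{ItalianoLauraSantaroni2012} quoted in the introduction. The algorithm selects $w\in V\setminus T$ and adds to $E_\alpha$ the edges of two independent spanning trees $T_{1}^{w},T_{2}^{w}$ rooted at $w$ in $G(w)$, together with the reverses of the edges of two independent spanning trees $T_{1}^{R},T_{2}^{R}$ rooted at $w$ in $G^{R}(w)$. Fix any $y\in T$; since $G$ is 2-T-connected, $y$ is not a strong articulation point of $G$, so by the cited characterization $y$ is not a non-trivial dominator in either $G(w)$ or $G^{R}(w)$. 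Consequently, for every $x\in V\setminus\{w,y\}$, the vertex $y$ is not a dominator of $x$ in either flowgraph. By the definition of independent spanning trees, the two $w$-to-$x$ paths in $T_{1}^{w}$ and $T_{2}^{w}$ share only dominators of $x$, so at least one of them avoids $y$; the symmetric argument in $G^{R}$, followed by edge reversal, produces an $x$-to-$w$ path in $G_\alpha\setminus\{y\}$ for every such $x$. Hence $w$ reaches and is reached by every vertex of $V\setminus\{y\}$ inside $G_\alpha\setminus\{y\}$, which is therefore strongly connected, and the feasibility of $G_\alpha$ follows.
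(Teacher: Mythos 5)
Your proof is correct and follows essentially the same route as the paper: the same case split on $T$, the spanning-tree argument for $1\le |T|\le 2$, and, for $|T|>2$, the characterization of strong articulation points via non-trivial dominators combined with independent spanning trees rooted at a vertex $w\in V\setminus T$. The only difference is that where the paper cites the fact that $(V,T_{1}^{w}\cup T_{2}^{w},w)$ has the same non-trivial dominators as $G(w)$ (and similarly for $G^{R}(w)$), you unfold the definition of independence to exhibit, for each $y\in T$, paths avoiding $y$ directly, which is just a self-contained rendering of the same step.
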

\begin{proof}
Let $G_{\alpha}=(V,E_{\alpha})$ be the output of Algorithm \ref{algo:approximationalgorithmfor2Tconnected}. We consider three cases: 
\end{proof} 
\begin{itemize} 
\item Case $1$. $T=\emptyset$. In this case, the output $G_{\alpha}$ is $2$-edge connected since it contains a minimal $2$-edge connected subgraph. Thus, by definition, $G_{\alpha}$ is also $2$-T-connected.
\item Case $2$.  $T=V$ . In this case, the output $G_{\alpha}$ is $2$-edge connected since $G_{\alpha}$  contains a minimal $2$-vertex connected subgraph of $G$. Furthermore, for each vertex $x\in T$, the subgraph $G_{\alpha} \setminus \lbrace x\rbrace$ is strongly connected because $G_{\alpha}$ is $2$-vertex-connected.
\item Case $3$. $T\neq V$ and $T\neq \emptyset$. We consider two subcases. In the first subcase we have $|T|>0$ and $T\leq 2$. Note that Algorithm \ref{algo:approximationalgorithmfor2Tconnected} finds in lines $8$--$15$ a strongly connected spanning subgraph of $G\setminus \lbrace y\rbrace$ for each vertex $y \in T$ by computing a spanning tree $T^{1}$ rooted at $q$ of $G\setminus \lbrace y\rbrace$ and a spanning tree $T^{2}$ rooted at $q$ of $G^{R}\setminus \lbrace y\rbrace$ for some vertex $q\in V\setminus\lbrace y\rbrace$. It is easy to see why  the subgrpah obtained by executing lines $9$--$15$ is strongly connected \cite{SandersSandersDietzfelbingerDementiev2019}. In the second subcase we have $|T|>2$ and $|T|<n$ (see lines $17$--$24$). By \cite[Theorem $5.2$]{ItalianoLauraSantaroni2012} A vertex $x\in V\setminus \lbrace w\rbrace$ is a strong articulation point if and only if $x$ is a non-trivial dominator in $G(w)$ or in $G^{R}(w)=(V,E^{R},w)$. The flowgraph $G(w)$ and the flowgraph $(V,T_{1}^{w}\cup T_{2}^{w},w)$  have the same non-trivial dominators \cite{Georgiadis12011,GeorgiadisTarjan2005}. Moreover, the flowgraph $G^{R}(w)$ the flowgraph  $(V,T_{1}^{R} \cup T_{2}^{R},w)$ have also the same non-trivial dominators \cite{Georgiadis12011,GeorgiadisTarjan2005}. Therefore, none of the vertices in T is a strong articulation point in the output $G_{\alpha}=(V,E_{\alpha})$.
\end{itemize}
Theorem \ref{def:approximationfactor4} shows that there exists a $4$-approximation algorithm for M2TC. 
\begin{theorem} \label{def:approximationfactor4}
The subgraph returned by Algorithm \ref{algo:approximationalgorithmfor2Tconnected} has at most $8n$ edges,
\end{theorem}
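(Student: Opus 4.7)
The plan is to bound $|E_\alpha|$ by $8n$ via a case analysis that tracks exactly how many edges the algorithm inserts in each of its branches. The key external ingredient is the bound already cited in the introduction (Edmonds \cite{Edmonds1972}/Mader \cite{Mader1985}, as used in Cheriyan et al.\ \cite{CheriyanTThurimella2000}): the edge set of any minimal $k$-vertex-connected directed graph on $n$ vertices has cardinality at most $2kn$. For $k=2$ this gives a $4n$ bound, and the analogous statement holds for minimal $2$-edge-connected spanning subgraphs as well.

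With this in hand, the branches $T=\emptyset$ (line 2) and $T=V$ (line 4) are immediate: in both, $E_\alpha$ is produced as a single minimal $2$-edge-connected (respectively, $2$-vertex-connected) subgraph, so $|E_\alpha|\le 4n\le 8n$. This disposes of two of the four cases with no further counting.

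The interesting case is $\emptyset\neq T\neq V$. The algorithm first computes, on line 6, a minimal $2$-edge-connected subgraph contributing at most $4n$ edges. If $1\le |T|\le 2$, then for each $y\in T$ the algorithm appends one spanning tree of $G\setminus\{y\}$ and one of $G^R\setminus\{y\}$; each such tree has exactly $n-2$ edges, so over at most two choices of $y$ this adds at most $4(n-2)=4n-8$ edges, for a total of $8n-8\le 8n$. If $|T|>2$, I would point out that line 22 resets $E_\alpha \leftarrow T_1^w\cup T_2^w$, which discards the minimal $2$-edge-connected subgraph produced in line 6; after lines 22--24 the set $E_\alpha$ therefore consists of $2(n-1)$ edges from two independent spanning trees of $G$ plus $2(n-1)$ edges from two reversed independent spanning trees of $G^R$, yielding $|E_\alpha|\le 4(n-1)=4n-4\le 8n$.

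I do not foresee any substantive obstacle: once the Mader/Edmonds bound is invoked, the argument reduces entirely to counting edges in spanning trees ($n-1$ edges each) and in spanning trees of $G\setminus\{y\}$ ($n-2$ edges each). The one subtlety worth flagging explicitly in the write-up is the assignment on line 22, since without noticing that it overwrites $E_\alpha$ one would otherwise double-count the minimal $2$-edge-connected subgraph in the $|T|>2$ subcase.
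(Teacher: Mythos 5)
Your proof is correct and follows essentially the same route as the paper: the Edmonds--Mader $4n$ bound for the minimal $2$-edge-connected and $2$-vertex-connected cases, $n-2$ edges per spanning tree of $G\setminus\{y\}$ when $1\le |T|\le 2$, and $n-1$ edges per independent tree when $|T|>2$. The only divergence is your literal reading of line 22 as overwriting $E_{\alpha}$; the paper instead keeps the line-6 subgraph (at most $4n$ edges) and adds the at most $4(n-1)$ tree edges in that subcase, and either reading yields the claimed $8n$ bound.
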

\begin{proof}
Let $G_{\alpha}=(V,E_{\alpha})$ be the output of Algorithm \ref{algo:approximationalgorithmfor2Tconnected}. We distinguish three cases:
\begin{itemize} 
\item Case $1$. $T=\emptyset$. In this case $G_{\alpha}=(V,E_{\alpha})$ is a minimal $2$-edge connected subgraph of $G$. Results of Edmonds \cite{Edmonds1972} and Mader \cite{Mader1985} imply that  $|E_{\alpha}|$ is not greater than $4n$ \cite{CheriyanTThurimella2000,Georgiadis12011}.
\item Case $2$. $T=V$. In this case $G_{\alpha}=(V,E_{\alpha})$ is a minimal $2$-vertex connected subgraph of $G$. Results of Edmonds \cite{Edmonds1972} and Mader \cite{Mader1985} imply the number of edges in $|E_{\alpha}|$ is at most $4n$. \cite{CheriyanTThurimella2000,Georgiadis12011}.
\item Case $3$. $T\neq V$ and $T\neq \emptyset$. We consider two subcases:
\begin{enumerate}
\item Case $3(a)$. $|T|>0$ and $T\leq 2$. The for loop of lines $8$--$15$ computes a spanning tree $T^{1}$  of $G\setminus \lbrace y\rbrace$ and a spanning tree $T^{2}$ of $G^{R}\setminus \lbrace y\rbrace$ for each vertex $y \in T$. Each spanning tree has only $n-2$ edges. Since $|T|<3$, the number of edges in the edge set computed in lines $8$--$15$ is $2|T|(n-2)\leq 4(n-2)$.
\item Case $3(b)$. $|T|>2$ and $|T|<n$. Lines $17$--$24$ produce two independent trees $T_{1}^{w},T_{2}^{w}$ rooted at $w$ of $G$ and two independent trees $T_{1}^{R},T_{2}^{R}$ rooted at $w$ of  $G^{R}$.  Since each tree of them has only $n-1$ edges, the number of edges in the edge set computed in lines $17$--$24$ is at most $4(n-1)$.
\end{enumerate}
\end{itemize}
Because the edge set of the subgraph computed in line $6$ contains at most $4n$, the number of edges of the output $G_{\alpha}=(V,E_{\alpha})$ is not greater than $8n$.

\end{proof}

\begin{Theorem}
The running time of Algorithm \ref{algo:approximationalgorithmfor2Tconnected} is $O(m^{2})$.
\end{Theorem}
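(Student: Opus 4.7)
The plan is to walk through the three cases of Algorithm \ref{algo:approximationalgorithmfor2Tconnected} and bound the running time of each phase separately, identifying the dominant cost as the minimal subgraph computation performed in lines $2$, $4$, and $6$.

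First, I would analyze the cost of computing a minimal $2$-edge-connected spanning subgraph (lines $2$ and $6$) and a minimal $2$-vertex-connected spanning subgraph (line $4$). The standard approach is to scan the edges of $G$ in some order and, for each edge, test whether its removal preserves $2$-edge-connectivity (respectively $2$-vertex-connectivity) by checking for new strong bridges or strong articulation points via the linear-time algorithms of Italiano et al.~\cite{ItalianoLauraSantaroni2012}; if the test still passes, the edge is discarded, otherwise it is retained. Each such test runs in $O(n+m)$ time, and since at most $m$ edges are examined, this scan costs $O(m(n+m))=O(m^{2})$ (using $m\geq n-1$ in a strongly connected graph).

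Next, I would bound the work performed in the remaining branches. In Case~$3(a)$, where $1\leq|T|\leq 2$, the loop of lines $8$--$15$ iterates over at most two vertices $y\in T$, and for each $y$ it builds one spanning tree of $G\setminus\{y\}$ and one of $G^{R}\setminus\{y\}$ using BFS or DFS, costing $O(n+m)$ per vertex. In Case~$3(b)$, where $|T|>2$, the algorithm computes two independent spanning trees rooted at $w$ in $G$ and in $G^{R}$; by the result of Georgiadis and Tarjan~\cite{GeorgiadisTarjan2005} this is done in linear time $O(n+m)$. Thus the work beyond line~$6$ is $O(n+m)$ in both subcases, which is absorbed by the $O(m^{2})$ bound from the minimal subgraph step.

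Summing across the cases yields an overall bound of $O(m^{2})$. The main obstacle I expect is the rigorous justification that the minimal $2$-edge-connected (resp.~$2$-vertex-connected) subgraph can be constructed in $O(m^{2})$ time; this requires invoking the linear-time strong articulation point and strong bridge tests of~\cite{ItalianoLauraSantaroni2012} as the inner subroutine of an edge-deletion loop, and carefully arguing that each surviving edge is tested a bounded number of times so that the outer loop contributes only a factor of $m$.
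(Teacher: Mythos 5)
Your proposal is correct and follows essentially the same argument as the paper: the minimal-subgraph computation dominates at $O(m(n+m))=O(m^{2})$ via one pass over the edges with the linear-time strong-bridge/strong-articulation-point tests of Italiano et al., while the spanning-tree and independent-spanning-tree steps of lines $8$--$15$ and $17$--$24$ cost only $O(n+m)$. The only minor difference is that for line $4$ the paper simply cites the $O(n^{2})$ algorithm of Georgiadis et al.\ for a minimal $2$-vertex-connected subgraph instead of your edge-deletion scan, but both are absorbed by the $O(m^{2})$ bound; also note that your worry about edges being "tested a bounded number of times" dissolves because a single pass (test each edge once, delete if the property survives) already yields a minimal subgraph.
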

\begin{proof}
	Line $2$ takes $O(m^{2})$ time because testing whether a directed graph is $2$-edge-connected can be done in linear time using the algorithms of Italiano et al. \cite{ItalianoLauraSantaroni2012}. A minimal $2$-vertex-connected subgraph can be identified in time $O(n^2)$ \cite{Georgiadis12011,GeorgiadisItalianoKaranasiou2020}. The for loop of lines $8$--$15$ takes $O(|T|(n+m))=O(n+m)$ since $|T|<3$. Furthermore, lines $17$--$24$ take $O(n+m) $ because two independent spanning trees can be computed in $O(n+m)$ time. \cite{GeorgiadisTarjan2005,Georgiadis12011}.
\end{proof}

\section{Minimal 2-T-connected directed graphs}

A directed graph $G=(V,E)$ is called a minimal $2$-T-connected graph if $G$ is $2$-T-connected but the subgraph $(V,E \setminus \lbrace e\rbrace$ is no longer $2$-T-connected for all $e \in E$ \cite{GevigneySzigeti2018}.

\begin{lemma} \label{def:minimal2tappalgorithm}
If there is some constant $r>1$ such that each minimal $2$-T-connected graph has at most $rn$ edges, then there is a $r/2$-approximation algorithm for M2TC. 
\end{lemma}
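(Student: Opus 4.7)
The plan is to give a simple greedy \emph{edge-peeling} algorithm and bound its output by comparing the two natural quantities: the size of any minimal $2$-T-connected subgraph (upper bound by hypothesis) and the size of an optimal $2$-T-connected subgraph (lower bound from degree counting).

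First I would describe the algorithm. Starting from the input $2$-T-connected graph $G=(V,E)$, initialize $E'\leftarrow E$ and repeatedly scan the edges: for each $e\in E'$, test whether $(V,E'\setminus\{e\})$ is still $2$-T-connected, and if so delete $e$. When no further edge can be removed, output $G_{\min}=(V,E')$. By construction $G_{\min}$ is $2$-T-connected, and removing any of its edges destroys $2$-T-connectivity, so $G_{\min}$ is a minimal $2$-T-connected subgraph of $G$. By the hypothesis of the lemma, $|E'|\leq rn$. The $2$-T-connectivity test in each iteration can be carried out in polynomial time: it suffices to check $2$-edge-connectivity and, for each $y\in T$, verify that $G_{\min}\setminus\{y\}$ is strongly connected, both of which run in linear time by the algorithms of Italiano et al.\ \cite{ItalianoLauraSantaroni2012}. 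Hence the whole peeling procedure runs in polynomial time.

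Next I would establish a lower bound on the optimum. Let $E^{2T}\subseteq E$ be an optimal solution of M2TC. Since $(V,E^{2T})$ is $2$-T-connected it is in particular $2$-edge-connected, so every vertex of $V$ has indegree at least $2$ and outdegree at least $2$ in $(V,E^{2T})$. Summing indegrees gives $|E^{2T}|\geq 2n$. Combining this with the upper bound above yields
\[
\frac{|E'|}{|E^{2T}|}\;\leq\;\frac{rn}{2n}\;=\;\frac{r}{2},
\]
so the peeling algorithm is an $r/2$-approximation for M2TC.

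The main obstacle is essentially bookkeeping rather than mathematics: one must make sure that the minimality test is well defined and polynomial. This is handled by the linear-time strong-articulation-point and strong-bridge algorithms already cited in the paper, which let us decide $2$-T-connectivity of any candidate subgraph in $O(|T|(n+m))$ time. Everything else reduces to the indegree/outdegree lower bound of $2n$ on any $2$-edge-connected (and hence $2$-T-connected) spanning subgraph, and to invoking the hypothesis on minimal $2$-T-connected graphs to get the matching upper bound $rn$.
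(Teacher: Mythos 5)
Your proposal is correct and follows essentially the same route as the paper: the optimum is bounded below by $2n$ via the indegree-at-least-$2$ argument for $2$-edge-connected spanning subgraphs, and the hypothesis gives the $rn$ upper bound on any minimal $2$-T-connected subgraph, yielding the ratio $r/2$. The only difference is that you spell out the greedy edge-peeling algorithm and its polynomial running time explicitly, which the paper leaves implicit (it is realized by its Algorithm 2.4), while its proof records only the $|E_{O}|\geq 2n$ lower bound via Menger's theorem.
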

\begin{proof}
Let $G=(V,E)$ be a $2$-T-connected graph and let $E_{O}\subseteq E$ be an optimal solution for M2TC. By definition, the subgraph $(V,E_{O})$ is $2$-edge connected because this subgraph is $2$-T-connected.  Since by Menger's Theorem \cite{Menger1927} for edge connectivity, $(V,E_{O})$ contains two edge disjoint paths from $v$ to $w$ for any two vertices $v,w \in V$, each vertex $w\in V$  has indegree at least $2$ in the subgraph  $(V,E_{O})$. Therefore, we have $|E_{O}|\geq 2n$.
\end{proof}

Mader\cite{Mader1978,Mader2002} showed that every minimal $2$-edge connected directed graph and every minimal $2$-vertex connected graph has a vertex with indegree and outdegree $2$. 
Durand de Gevigney and Szigeti \cite{GevigneySzigeti2018} proved that every minimal $2$-T-connected graph has a vertex with indegree and outdegree $2$. By results from \cite{Edmonds1972,Mader1985}, the cardinality of the edge set of each minimal $2$-vertex-connected directed graph is at most $4n$ edges \cite{CheriyanTThurimella2000}. Therefore, there are $2$-approximation algroithms for the minimum $2$-edge connected subgraph problem and the minimum $2$-vertex connected subgraph problem in directed graphs \cite{CheriyanTThurimella2000}. An important question is whether the cardinality of the edge set of each minimal $2$-T-connected graph is at most $4n$.

\section{An improved version of Algorithm \ref{algo:approximationalgorithmfor2Tconnected}}
The subgraph returned by Alogirthm \ref{algo:approximationalgorithmfor2Tconnected} is $2$-T-connected but not necessary a minimal $2$-T-connected graph. Algorithm \ref{algo:improvedapproximationalgorithmfor2Tconnected} is a refined version of Algorithm \ref{algo:approximationalgorithmfor2Tconnected}. Actually, in lines $2$--$5$ of Algorithm \ref{algo:improvedapproximationalgorithmfor2Tconnected} we remove unneeded edges from the output of Alogirthm \ref{algo:approximationalgorithmfor2Tconnected}. Note that the approximation factor of Algorithm \ref{algo:improvedapproximationalgorithmfor2Tconnected} is $4$ because the for loop of lines $2$--$4$ does not add any edge to the output of.Alogirthm \ref{algo:approximationalgorithmfor2Tconnected}.
\begin{figure}[htbp]
	\begin{myalgorithm}\label{algo:improvedapproximationalgorithmfor2Tconnected}\rm\quad\\[-5ex]
		\begin{tabbing}
			\quad\quad\=\quad\=\quad\=\quad\=\quad\=\quad\=\quad\=\quad\=\quad\=\kill
			\textbf{Input:} A $2$-T-connected graph $G=(V,E)$ \\
			\textbf{Output:} a $2$-T-connected subgraph $G_{\alpha}=(V,E_{\alpha})$\\
            {\small 1}\> lines $1$--$24$ of Algorithm \ref{algo:approximationalgorithmfor2Tconnected} \\
            {\small 2}\> \textbf{for} each edge $e \in E_{\alpha}$ \textbf{do} \\
            {\small 3}\>\> \textbf{if} the subgraph $(V,E_{\alpha}\setminus \lbrace e\rbrace)$ is $2$-T-connected \textbf{then} \\
            {\small 4}\>\>\>$E_{\alpha}\rightarrow E_{\alpha}\setminus \lbrace e\rbrace$.\\
		{\small 5}\>output the subgraph $G_{\alpha}=(V,E_{\alpha})$
	
		\end{tabbing}
	\end{myalgorithm}
\end{figure}
\begin{Theorem}
 Algorithm \ref{algo:improvedapproximationalgorithmfor2Tconnected} takes $O(m^{2})$ time.
\end{Theorem}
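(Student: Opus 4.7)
The plan is to bound the running time by analyzing separately the two stages of Algorithm \ref{algo:improvedapproximationalgorithmfor2Tconnected}: the invocation of Algorithm \ref{algo:approximationalgorithmfor2Tconnected} in line $1$, and the edge-pruning loop of lines $2$--$4$. By the previous theorem, line $1$ takes $O(m^{2})$ time, so the entire analysis reduces to showing that the pruning loop also fits inside $O(m^{2})$.

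For the pruning loop, I would first observe that by Theorem \ref{def:approximationfactor4}, the edge set $E_{\alpha}$ produced at the end of line $1$ satisfies $|E_{\alpha}| \leq 8n$. Hence the outer \textbf{for} loop of lines $2$--$4$ executes at most $O(n)$ iterations. In each iteration we consider one subgraph $(V, E_{\alpha} \setminus \{e\})$ whose edge set still has size $O(n)$, and we perform a single $2$-T-connectivity test on it.

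The key step is to argue that a $2$-T-connectivity test runs in linear time. By definition, a subgraph $H=(V,E_{H})$ is $2$-T-connected if and only if $H$ is $2$-edge connected and no vertex of $T$ is a strong articulation point of $H$. Both conditions can be checked in $O(|V|+|E_{H}|)$ time using the linear-time algorithms of Italiano et al. \cite{ItalianoLauraSantaroni2012} for computing all strong bridges and all strong articulation points; once all strong articulation points of $H$ are listed, we just intersect the list with $T$. Since $|E_{H}| = O(n)$, each test costs $O(n)$ time, so the whole pruning loop contributes $O(n) \cdot O(n) = O(n^{2})$.

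Combining the two stages and using the fact that a $2$-T-connected graph has $m \geq 2n$ (so $n^{2} \leq m^{2}$), the total running time is $O(m^{2}) + O(n^{2}) = O(m^{2})$. The only nontrivial point in the argument is verifying that a complete $2$-T-connectivity test can be performed in linear time rather than, say, once per vertex of $T$; this is immediate from the Italiano et al. characterization that has already been invoked throughout the paper, and is not a real obstacle.
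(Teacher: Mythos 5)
Your proof is correct and follows essentially the same route as the paper: line $1$ is charged to the previous theorem, and the pruning loop is bounded by the number of iterations times a linear-time $2$-T-connectivity test (strong bridges plus strong articulation points via Italiano et al.), which fits inside $O(m^{2})$. Your version is in fact slightly more explicit than the paper's, since you invoke the $|E_{\alpha}|\leq 8n$ bound to limit the loop to $O(n)$ iterations, whereas the paper simply states the loop costs $O(n(n+m))$.
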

\begin{proof}
	 We can test whether a directed graph is $2$-T-connected or not in $O(n+m)$ time by computing all the strong bridges and the strong articulation points using the algorithms of Italiano et al. \cite{ItalianoLauraSantaroni2012}. Consequently, lines $2$--$5$ take $O(n(n+m))$ time.
\end{proof}
Note that for each minimal $2$-T-connected graph $G=(V,E)$, Algorithm \ref{algo:improvedapproximationalgorithmfor2Tconnected} returns the same graph. This means each minimal $2$-T-connected graph has at most $8n$ edges.

\end{document}